\title{A Quantum Dot with Impurity in the Lobachevsky Plane}
\author[V. Geyler]{V.~Geyler}
\address{Department of Mathematics\\
  Mordovian State University\\
  Saransk, Russia}
\author{P.~\v{S}\v{t}ov\'\i\v{c}ek}
\address{Department of Mathematics\\
 Faculty of Nuclear Sciences\\
 Czech Technical University\\
 Prague, Czech Republic}
\email{stovicek@kmalpha.fjfi.cvut.cz}
\author{M.~Tu\v{s}ek}
\address{Department of Mathematics\\
 Faculty of Nuclear Sciences\\
 Czech Technical University\\
 Prague, Czech Republic}
\email{tusekmat@km1.fjfi.cvut.cz}
\newtheorem{thm}{Theorem}[section]
\newtheorem{prop}[thm]{Proposition}
\theoremstyle{definition}
\theoremstyle{remark}
\newtheorem*{rem*}{Remark}
\numberwithin{equation}{section}
\newcommand{\ud}{\mathrm{d}}
\newcommand{\parc}[2]{\frac{\partial #1}{\partial #2}}
\newcommand{\abs}[1]{\vert #1 \vert}
\newcommand{\lsz}{\left\lbrace }
\newcommand{\psz}{\right\rbrace }
\newcommand{\Gz}{\mathcal{G}_{z}}
\newcommand{\ena}[1]{\mathrm{e}^{#1}}
\newcommand{\N}{\mathbb{N}}
\newcommand{\C}{\mathbb{C}}
\newcommand{\R}{\mathbb{R}}
\newcommand{\Z}{\mathbb{Z}}
\newcommand{\Dom}{\mathop\mathrm{Dom}\nolimits}
\newcommand{\diag}{\mathop\mathrm{diag}\nolimits}
\newcommand{\spec}{\mathop\mathrm{spec}\nolimits}
\begin{document}

\begin{abstract}
  The curvature effect on a quantum dot with impurity is investigated.
  The model is considered on the Lobachevsky plane. The confinement
  and impurity potentials are chosen so that the model is explicitly
  solvable. The Green function as well as the Krein $Q$-function are
  computed.
\end{abstract}

\keywords{quantum dot, Lobachevsky plane, point interaction, spectrum}

\maketitle

\section{Introduction}

Physically, quantum dots are nanostructures with a charge carriers
confinement in all space directions. They have an atom-like energy
spectrum which can be modified by adjusting geometric parameters of
the dots as well as by the presence of an impurity. Thus the study of
these dependencies may be of interest from the point of view of the
nanoscopic physics.

A detailed analysis of three-dimensional quantum dots with a
short-range impurity in the Euclidean space can be found in
\cite{bgl}. Therein, the harmonic oscillator potential was used to
introduce the confinement, and the impurity was modeled by a point
interaction ($\delta$-potential). The starting point of the analysis
was derivation of a formula for the Green function of the unperturbed
Hamiltonian (i.e., in the impurity free case), and application of the
Krein resolvent formula jointly with the notion of the Krein
$Q$-function.

The current paper is devoted to a similar model in the hyperbolic
plane. The nontrivial hyperbolic geometry attracts regularly
attention, and its influence on the properties of quantum-mechanical
systems has been studied on various models (see, for example,
\cite{com,aco,lisovyy}). Here we make use of the same method as in
\cite{bgl} to investigate a quantum dot with impurity in the
Lobachevsky plane. We will introduce an appropriate Hamiltonian in a
manner quite analogous to that of \cite{bgl} and derive an explicit
formula for the corresponding Green function. In this sense, our model
is solvable, and thus its properties may be of interest also from the
mathematical point of view.

During the computations to follow, the spheroidal functions appear
naturally. Unfortunately, the notation in the literature concerned
with this type of special functions is not yet uniform (see, e.g.,
\cite{be} and \cite{me}). This is why we supply, for the reader's
convenience, a short appendix comprising basic definitions and results
related to spheroidal functions which are necessary for our approach.

\section{A quantum dot with impurity in the Lobachevsky plane}

\subsection{The model}

Denote by $(\varrho,\phi)$, $0<\varrho<\infty$, $0\leq\phi<2\pi$, the
geodesic polar coordinates on the Lobachevsky plane. Then the metric
tensor is diagonal and reads
\begin{equation*}
  (g_{ij}) = \diag\!\left(1,a^2\sinh^2\frac{\varrho}{a}\right)
\end{equation*}
where $a$, $0<a<\infty$, denotes the so called curvature radius which
is related to the scalar curvature by the formula $R=-2/a^2$.
Furthermore, the volume form equals
$dV=a\sinh(\varrho/a)\ud\varrho\wedge\ud\phi$. The Hamiltonian for a
free particle of mass $m=1/2$ takes the form
\begin{equation*}
  H^{0} = -\left(\Delta_{LB}+\frac{1}{4a^{2}}\right)
  = -\frac{1}{\sqrt{g}}\frac{\partial}{\partial{x^{i}}}\sqrt{g}g^{ij}
  \frac{\partial}{\partial{x^{j}}}-\frac{1}{4a^{2}}
\end{equation*}
where $\Delta_{LB}$ is the Laplace-Beltrami operator and
$g=\det{}g_{ij}$. We have set $\hbar=1$.

The choice of a potential modeling the confinement is ambiguous. We
naturally require that the potential takes the standard form of the
quantum dot potential in the flat limit ($a\to\infty$). This is to say
that, in the limiting case, it becomes the potential of the isotropic
harmonic oscillator $V_\infty=\frac{1}{4}\omega^2\varrho^2$. However,
this condition clearly does not specify the potential uniquely. Having
the freedom of choice let us discuss the following two possibilities:
\begin{eqnarray}
  &\mathrm{a)}\quad V_{a}(\varrho)
  = \frac{1}{4}\,a^{2}\omega^{2}\tanh^2\frac{\varrho}{a}, \\
  &\mathrm{b)}\quad U_{a}(\varrho)
  = \frac{1}{4}\,a^{2}\omega^{2}\sinh^2\frac{\varrho}{a}.
\end{eqnarray}

Potential $V_{a}$ is the same as that proposed in \cite{ran} for the
classical harmonic oscillator on the Lobachevsky plane. With this
choice, it has been demonstrated in \cite{ran} that the model is
superintegrable, i.e., there exist three functionally independent
constants of motion. Let us remark that this potential is bounded, and
so it represents a bounded perturbation to the free Hamiltonian. On
the other hand, the potential $U_{a}$ is unbounded. Moreover, as shown
below, the stationary Schr\"odinger equation for this potential leads,
after the partial wave decomposition, to the differential equation of
spheroidal functions. The current paper concentrates exclusively on
case b).

The impurity is modeled by a $\delta$-potential which is introduced
with the aid of self-adjoint extensions and is determined by boundary
conditions at the base point. We restrict ourselves to the case when
the impurity is located in the center of the dot ($\varrho =0$). Thus
we start from the following symmetric operator:
\begin{equation}
  \label{ham}
  \begin{split}
    & \begin{split}H
      = -\left(\parc{^2}{\varrho^2}
        +\frac{1}{a}\coth\!\left(\frac{\varrho}{a}\right)
        \parc{}{\varrho}+\frac{1}{a^2}
        \sinh^{-2}\!\left(\frac{\varrho}{a}\right)
        \parc{^2}{\phi^2}+\frac{1}{4a^2}\right)
      +\frac{1}{4}\,a^{2}\omega^{2}
      \sinh^2\!\left(\frac{\varrho}{a}\right),
    \end{split} \\
    & \Dom(H) = C^{\infty}_{0}((0,\infty)\times S^1)
    \subset L^{2}\left((0,\infty)\times S^1,
      a\,\sinh\!\left(\frac{\varrho}{a}\right)
      \ud\varrho\,\ud\phi\right).
  \end{split}
\end{equation}

\subsection{Partial wave decomposition}

Substituting $\xi=\cosh(\varrho/a)$ we obtain
\begin{equation}
  \label{Htilde}
  \begin{split}
    & H = \frac{1}{a^2}\left[(1-\xi^{2})\parc{^2}{\xi^{2}}
      -2\xi\parc{}{\xi}+(1-\xi^2)^{-1}\parc{^{2}}{\phi^{2}}
      + \frac{a^{4}\omega^{2}}{4}(\xi^{2}-1)-\frac{1}{4} \right]
    =: \frac{1}{a^{2}}\tilde{H}, \\
    & \Dom(H) = C^{\infty}_{0}((1,\infty)\times S^1)
    \subset L^{2}\left((1,\infty)\times S^1,
      a^{2}\ud\xi\,\ud\phi\right).
  \end{split}
\end{equation}
Using the rotational symmetry which amounts to a Fourier transform in
the variable $\phi$, $\tilde{H}$ may be decomposed into a direct sum
as follows
\begin{equation*}
  \begin{split}
    &\tilde{H} = \bigoplus\limits_{m=-\infty}^{\infty}\tilde{H}_{m},\\
    &\begin{split}
      \tilde{H}_{m}
      & = -\parc{}{\xi}\left((\xi^{2}-1)\,\parc{}{\xi}\right)
      +\frac{m^2}{\xi^2-1}+\frac{a^{4}\omega^{2}}{4}\,(\xi^{2}-1)
      -\frac{1}{4}\,,
    \end{split}\\
    & \Dom(\tilde{H}_{m})=C^{\infty}_{0}(1,\infty)
    \subset L^{2}((1,\infty),\ud\xi).
  \end{split}
\end{equation*}
Note that $\tilde{H}_{m}$ is a Sturm-Liouville operator.

\begin{prop}
  $\tilde{H}_{m}$ is essentially self-adjoint for $m\neq 0$,
  $\tilde{H}_{0}$ has deficiency indices $(1,1)$.
\end{prop}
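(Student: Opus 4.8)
The plan is to invoke Weyl's limit point / limit circle theory for the Sturm--Liouville operator $\tilde H_m$ on the open interval $(1,\infty)$, both endpoints of which, $\xi=1$ and $\xi=\infty$, are singular. Recall that the closure of the operator defined on $C^\infty_0(1,\infty)$ is self-adjoint exactly when both endpoints are of limit point type, and that its deficiency indices are $(1,1)$ when one endpoint is limit point and the other is limit circle. Thus the proof reduces entirely to classifying the two endpoints, and the parameter $m$ will enter only at $\xi=1$.

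For the endpoint at infinity it is convenient to return to the variable $\varrho$ (so that $\xi=\cosh(\varrho/a)$, a smooth diffeomorphism of $(0,\infty)$ onto $(1,\infty)$) and to remove the first-order term by the unitary map $L^2((1,\infty),\ud\xi)\to L^2((0,\infty),\ud\varrho)$, $u\mapsto(\sinh(\varrho/a))^{1/2}u$. A routine Liouville computation converts $a^{-2}\tilde H_m$ into the Schr\"odinger operator $-\ud^2/\ud\varrho^2+V_m$, where
\[
  V_m(\varrho)=\frac{m^2-\frac14}{a^2\sinh^2(\varrho/a)}+\frac14\,\omega^2\sinh^2(\varrho/a).
\]
Since $V_m(\varrho)\to+\infty$ as $\varrho\to\infty$, the standard criterion for a half-line Schr\"odinger operator whose potential is bounded below near the endpoint places $\xi=\infty$ in the limit point case, for every $m$.

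It remains to analyse the endpoint $\xi=1$, i.e.\ $\varrho=0$, which is the heart of the matter. Near $\varrho=0$ one has $V_m(\varrho)=(m^2-\frac14)\varrho^{-2}+O(1)$, an Euler-type singularity, so a Frobenius analysis of $(\tilde H_m-\lambda)u=0$ (the classification being $\lambda$-independent) shows that its solutions behave like $\varrho^{1/2\pm|m|}$ when $m\neq0$, and like $\varrho^{1/2}$ and $\varrho^{1/2}\log\varrho$ when $m=0$; equivalently, in the variable $\xi$ one meets the Euler equation with exponents $\pm|m|/2$ in $\xi-1$. Testing square integrability near the endpoint: for $m\neq0$ the branch $\varrho^{1/2-|m|}$ fails to lie in $L^2$ near $0$ --- even in the borderline case $|m|=1$, where it is $\sim\varrho^{-1/2}$ --- so only one solution is square integrable and $\xi=1$ is limit point; for $m=0$ both $\varrho^{1/2}$ and $\varrho^{1/2}\log\varrho$ are square integrable near $0$, so $\xi=1$ is limit circle. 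Combined with the previous paragraph this gives essential self-adjointness for $m\neq0$ and deficiency indices $(1,1)$ for $m=0$.

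The main obstacle is precisely the endpoint $\xi=1$: one has to verify that the genuine solutions of the full equation inherit the leading asymptotics dictated by the most singular term --- standard for an Euler indicial equation, but requiring a perturbative estimate of the remainder --- and one must treat the borderline integrability with care, in particular checking that $|m|=1$ indeed falls under the limit point case. The Liouville reduction, the endpoint at infinity, and the final bookkeeping with deficiency indices are all routine.
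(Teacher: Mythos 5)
Your proof is correct, and its skeleton is the same as the paper's: classify both singular endpoints of the Sturm--Liouville operator $\tilde H_m$ by counting square-integrable solutions (Weyl limit point/limit circle), concluding essential self-adjointness when both ends are limit point and deficiency indices $(1,1)$ when exactly one end is limit circle. The difference is in the inputs at the two endpoints. The paper stays in the variable $\xi$ and quotes the explicit fundamental system of the spheroidal equation near $\xi=1$ (Meixner--Sch\"afke, Satz~5), with exponents $\pm m/2$ in $\xi-1$ and a logarithm for $m=0$, and then invokes Dunford--Schwartz (Theorem~XIII.6.14 and XIII.2.30) for the statement that exactly one solution is $L^2$ near infinity when $\mathrm{Im}\,z\neq0$. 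You instead perform the Liouville transformation to a half-line Schr\"odinger operator with potential $\bigl(m^2-\tfrac14\bigr)a^{-2}\sinh^{-2}(\varrho/a)+\tfrac14 a^2\omega^2\sinh^2(\varrho/a)$ (note you dropped a factor $a^2$ in the oscillator term, which is immaterial), read off the limit point case at infinity from the potential being bounded below there, and obtain the behaviour at $\varrho=0$ from a Frobenius analysis of the regular singular point, which gives exactly the exponents $\tfrac12\pm|m|$ (with the $\log$ for $m=0$) and correctly settles the borderline case $|m|=1$ via the non-integrability of $\varrho^{-1}$. Your route is somewhat more self-contained: it avoids the spheroidal-function literature and the Dunford--Schwartz citations at the price of the (standard, as you note) verification that the true solutions inherit the Euler-type leading asymptotics; the paper's route gets these asymptotics, including the logarithmic term, for free from the cited Satz. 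Both are complete arguments for the stated dichotomy.
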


\begin{proof}
  The operator $\tilde{H}_{m}$ is symmetric and semibounded, and so
  the deficiency indices are equal. If we set
  \begin{displaymath}
    \mu=\abs{m},\textrm{~}4\theta=-\frac{a^4\omega^2}{4}\,,
    \textrm{~}\lambda=-z-\frac{1}{4}\,,
  \end{displaymath}
  then the eigenvalue equation
  \begin{equation}
    \label{eigeneq}
    \left(-\parc{}{\xi}\left((\xi^{2}-1)\,\parc{}{\xi}\right)
      +\frac{m^2}{\xi^2-1}+\frac{a^{4}\omega^{2}}{4}\,(\xi^{2}-1)
      -\frac{1}{4}\right)\!\psi = z\psi
  \end{equation}
  takes the standard form of the differential equation of spheroidal
  functions \eqref{spheroidal}. According to chapter~3.12, Satz~5 in
  \cite{me}, for $\mu=m\in\N_{0}$ a fundamental system
  $\lsz{}y_{\text{I}},\ y_{\text{II}}\psz$ of solutions to equation
  \eqref{eigeneq} exists such that
  \begin{align*}
    & y_{\text{I}}(\xi) = (1-\xi)^{m/2}\,\mathfrak{P}_{1}(1-\xi),
    \quad\mathfrak{P}_{1}(0)=1, \\
    & y_{\text{II}}(\xi)=(1-\xi)^{-m/2}\mathfrak{P}_{2}(1-\xi)
    +A_{m}\,y_{\text{I}}(\xi)\log{(1-\xi)},
  \end{align*}
  where, for $\abs{\xi-1}<2$, $\mathfrak{P}_{1},\mathfrak{P}_{2}$ are
  analytic functions in $\xi$, $\lambda$, $\theta$; and $A_{m}$ is a
  polynomial in $\lambda$ and $\theta$ of total order $m$ with respect
  to $\lambda$ and $\sqrt{\theta}$; $A_{0}=-1/2$.

  Suppose that $z\in\C\setminus\R$. For $m=0$, every solutions to
  \eqref{eigeneq} is square integrable near $1$; while for $m\neq0$,
  $y_{\text{I}}$ is the only one solution, up to a factor, which is
  square integrable in a neighborhood of $1$. On the other hand, by a
  classical analysis due to Weyl, there exists exactly one linearly
  independent solution to \eqref{eigeneq} which is square integrable
  in a neighborhood of $\infty$, see Theorem~XIII.6.14 in \cite{ds}.
  In the case of $m=0$ this obviously implies that the deficiency
  indices are $(1,1)$. If $m\neq0$ then, by Theorem~XIII.2.30 in
  \cite{ds}, the operator $\tilde{H}_{m}$ is essentially self-adjoint.
\end{proof}

Define the maximal operator associated to the formal differential
expression
\begin{displaymath}
  L = -\parc{}{\xi}\left( (\xi^{2}-1)\parc{}{\xi}\right)
  +\frac{a^{4}\omega^{2}}{4}(\xi^{2}-1) -\frac{1}{4}
\end{displaymath}
as follows
\begin{equation*}
  \begin{split}
    & \begin{split} \Dom(H_{max}) = \bigg\{& f\in
      L^{2}((1,\infty),\ud\xi):\textrm{~}f,f'\in AC((1,\infty)), \\
      & -\parc{}{\xi}\left((\xi^{2}-1)\,\parc{f}{\xi}\right)
      +\frac{a^{4}\omega^{2}}{4}\,(\xi^{2}-1)
      f\in L^{2}((1,\infty),\ud\xi)\bigg\}, \\
    \end{split} \\
    & H_{max}f=Lf.
  \end{split}
\end{equation*}
According to Theorem~8.22 in \cite{weidmann},
$H_{max}=\tilde{H}_{0}^{\dagger}$.

\begin{prop}
  Let $\kappa\in(-\infty,\infty\,]$. The operator
  $\tilde{H}_{0}(\kappa)$ defined by the formulae
\begin{equation*}
  \begin{split}
    & \Dom(\tilde{H}_{0}(\kappa))
    = \lsz f\in\Dom(H_{max}):\textrm{~}
    f_{1}=\kappa f_{0}\psz,\textrm{~}
    \tilde{H}_{0}(\kappa)f=H_{max}f,
  \end{split}
\end{equation*}
where
\begin{displaymath}
  f_{0} := -4\pi{}a^{2}\lim_{\xi\to1+}\,
  \frac{f(\xi)}{\log{\!\left(2a^2(\xi-1)\right)}}\,,
  \textrm{~}f_{1} : =\lim_{\xi\to 1+}f(\xi)
  +\frac{1}{4\pi a^{2}}\,f_{0}\log{\!\left(2a^2(\xi-1\right))},
\end{displaymath}
is a self-adjoint extension of $\tilde{H}_{0}$. There are no other
self-adjoint extensions of $\tilde{H}_{0}$.
\end{prop}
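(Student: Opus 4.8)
The plan is to apply the standard theory of self-adjoint extensions of Sturm-Liouville operators with one regular-type (here, limit-circle) endpoint and one limit-point endpoint. Since $\tilde{H}_{0}$ has deficiency indices $(1,1)$ by the preceding proposition, von Neumann's theory guarantees a one-parameter family of self-adjoint extensions, parametrized by $U(1)$; the content of the claim is that this family is exactly the one described by the boundary condition $f_{1}=\kappa f_{0}$ with $\kappa\in(-\infty,\infty]$, the value $\kappa=\infty$ corresponding to the Friedrichs-type extension where $f_{0}=0$.

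First I would recall from the proof of the previous proposition the explicit local behaviour of solutions near $\xi=1$: every solution is a combination of $y_{\mathrm{I}}$, which is analytic in $1-\xi$ with $y_{\mathrm{I}}(1)=1$, and $y_{\mathrm{II}}=\mathfrak{P}_{2}(1-\xi)-\tfrac12 y_{\mathrm{I}}(\xi)\log(1-\xi)$ (using $A_{0}=-1/2$). Hence any $f\in\Dom(H_{max})$ admits, near $\xi=1+$, an asymptotic expansion $f(\xi)=c\log(1-\xi)+d+o(1)$ (more precisely $f(\xi)+ \tfrac{1}{4\pi a^{2}}f_{0}\log(2a^{2}(\xi-1))$ has a finite limit), so that the two functionals $f_{0}$ and $f_{1}$ defined in the statement are well-defined, finite, and capture the full singular/regular boundary data at $\xi=1$; the normalization constants $-4\pi a^{2}$ and $2a^{2}$ are chosen to match the free Green function computed later and are immaterial for self-adjointness. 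Next I would compute the boundary form: for $f,g\in\Dom(H_{max})$, integration by parts gives
\begin{equation*}
  \langle H_{max}f,g\rangle-\langle f,H_{max}g\rangle
  = -\lim_{\xi\to1+}(\xi^{2}-1)\!\left(f'(\xi)\overline{g(\xi)}-f(\xi)\overline{g'(\xi)}\right),
\end{equation*}
the contribution at $+\infty$ vanishing because that endpoint is limit point. Using the local expansion one evaluates the right-hand side in terms of the boundary functionals; the key algebraic identity to verify is that this sesquilinear boundary form equals (up to a nonzero real constant) $f_{0}\overline{g_{1}}-f_{1}\overline{g_{0}}$, i.e. the functionals $f_{0},f_{1}$ are a pair of "canonical" boundary values with symplectic pairing. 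This Wronskian-type computation — tracking the $\log$ and constant terms carefully through $(\xi^{2}-1)(f'\bar g-f\bar g')$ as $\xi\to1$ — is the main obstacle, but it is a finite computation using only the stated form of $y_{\mathrm{I}},y_{\mathrm{II}}$.

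Once the boundary form is diagonalized in this way, the conclusion is immediate from the general theory (e.g. Theorem 8.22 / the Glazman decomposition in \cite{weidmann}, or Theorem~XIII.2.11 in \cite{ds}): the self-adjoint extensions of $\tilde{H}_{0}$ are precisely the restrictions of $H_{max}=\tilde{H}_{0}^{\dagger}$ to the subspaces of $\Dom(H_{max})$ on which the boundary form vanishes and which are maximal with this property, i.e. those defined by a single real-linear relation $\alpha f_{0}=\beta f_{1}$ with $(\alpha,\beta)\in\R^{2}\setminus\{0\}$ up to scaling. Writing $\kappa=\alpha/\beta$ recovers the family $f_{1}=\kappa f_{0}$ for $\kappa\in\R$, and the excluded direction $\beta=0$ gives $f_{0}=0$, which is the case $\kappa=\infty$. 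Self-adjointness of each such restriction, and the fact that there are no others, both follow from the deficiency indices being $(1,1)$ together with the nondegeneracy of the boundary pairing established in the previous step. I would close by noting that $\kappa=\infty$ singles out the extension with domain contained in $\{f_{0}=0\}$, which one checks coincides with the Friedrichs extension (the form domain excludes the $\log$-singular solution), although this identification is not strictly needed for the statement as phrased.
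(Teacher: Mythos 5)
Your proposal is correct in substance but follows a genuinely different route from the paper. The paper does not compute the boundary (Wronskian) form at all: it invokes the boundary-triple machinery of \cite{bgp}, writing the asymptotics of $f\in\Dom(H_{max})$ as $f(\xi)=f_{0}F(\xi,1)+f_{1}+o(1)$ with $F(\xi,1)$ the divergent part of the Green function of the Friedrichs extension, citing Proposition~1.37 of \cite{bgp} to assert that $(\C,\Gamma_{1},\Gamma_{2})$ with $\Gamma_{1}f=f_{0}$, $\Gamma_{2}f=f_{1}$ is a boundary triple, and then Theorem~1.12 of \cite{bgp} to get the bijection between self-adjoint relations in $\C$ (i.e. real lines $\C v$, $v=(1,\kappa)$ or $v=(0,1)$) and self-adjoint extensions; this explains where the specific normalizations $-4\pi a^{2}$ and $\log(2a^{2}(\xi-1))$ come from, since they are tied to the Green function used later for the Krein $Q$-function. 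You instead argue classically: deficiency indices $(1,1)$ from the previous proposition, the Meixner--Sch\"afke fundamental system \cite{me} to identify the $\log$-type singularity, Green's identity with the limit-point endpoint at infinity contributing nothing, and the characterization of self-adjoint extensions as maximal subspaces on which the boundary form vanishes (Lagrangian planes for the pairing $f_{0}\overline{g_{1}}-f_{1}\overline{g_{0}}$), which yields exactly the family $f_{1}=\kappa f_{0}$, $\kappa\in\R$, together with $f_{0}=0$. This is more self-contained and makes transparent why the normalization constants are immaterial for the statement, at the price of having to do the boundary-form computation that the paper outsources to \cite{bgp}. One caveat on that computation: the expansion $f(\xi)=c\log(\xi-1)+d+o(1)$ alone does not control $f'$, so you cannot literally evaluate $\lim_{\xi\to1+}(\xi^{2}-1)\bigl(f'\overline{g}-f\overline{g'}\bigr)$ from it; you should either define the boundary values through Wronskians with the reference solutions $y_{\mathrm{I}},y_{\mathrm{II}}$ (whose limits exist at a limit-circle endpoint) and show they agree with the limits in the statement, or derive the asymptotics of both $f$ and $(\xi^{2}-1)f'$ from a variation-of-parameters representation of $f$ near $\xi=1$. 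With that standard repair your argument goes through and proves the same proposition.
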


\begin{proof}
  The methods to treat $\delta$ like potentials are now well
  established \cite{aghh}. Here we follow an approach described in
  \cite{bgp}, and we refer to this source also for the terminology and
  notations. Near the point $\xi=1$, each $f\in\Dom(H_{max})$ has the
  asymptotic behavior
  \begin{equation*}
    f(\xi) = f_{0}\,F(\xi,1)+f_{1}+o(1)\quad\textrm{as~}\xi\to1+
  \end{equation*}
  where $f_{0},f_{1}\in\C$ and $F(\xi,\xi')$ is the divergent part of
  the Green function for the Friedrichs extension of $\tilde{H}_{0}$.
  By formula \eqref{divpart} which is derived below,
  $F(\xi,1)=-1/(4\pi{}a^2)\log\!\left(2a^2(\xi-1)\right)$.
  Proposition~1.37 in \cite{bgp} states that
  $(\C,\Gamma_{1},\Gamma_{2})$, with $\Gamma_{1}f=f_{0}$ and
  $\Gamma_{2}f=f_{1}$, is a boundary triple for $H_{max}$.

  According to theorem 1.12 in \cite{bgp}, there is a one-to-one
  correspondence between all self-adjoint linear relations $\kappa$ in
  $\C$ and all self-adjoint extensions of $\tilde{H}_{0}$ given by
  $\kappa\longleftrightarrow\tilde{H}_{0}(\kappa)$ where
  $\tilde{H}_{0}(\kappa)$ is the restriction of $H_{max}$ to the
  domain of vectors $f\in\Dom(H_{max})$ satisfying
  \begin{equation}\label{boundarycon}
    (\Gamma_{1}f,\Gamma_{2}f)\in\kappa.
  \end{equation}
  Every self-adjoint relation in $\C$ is of the form
  $\kappa=\C{}v\subset\C^2$ for some $v\in\R^2$, $v\neq0$. If (with
  some abuse of notation) $v=(1,\kappa)$, $\kappa\in\R$, then relation
  \eqref{boundarycon} means that $f_{1}=\kappa{}f_{0}$.  If $v=(0,1)$
  then \eqref{boundarycon} means that $f_{0}=0$ which may be
  identified with the case $\kappa=\infty$.
\end{proof}

\begin{rem*}
  Let $\mathfrak{q}_0$ be the closure of the quadratic form associated
  to the semibounded symmetric operator $\tilde{H}_{0}$. Only the
  self-adjoint extension $\tilde{H}_{0}(\infty)$ has the property that
  all functions from its domain have no singularity at the point
  $\xi=1$ and belong to the form domain of $\mathfrak{q}_0$. It
  follows that $\tilde{H}_{0}(\infty)$ is the Friedrichs extension of
  $\tilde{H}_{0}$ (see, for example, Theorem~X.23 in \cite{rs} or
  Theorems~5.34 and 5.38 in \cite{weidmann}).
\end{rem*}

\subsection{The Green function}

Let us consider the Friedrichs extension of the operator $\tilde{H}$
in $L^{2}\left((1,\infty)\times{}S^1,\ud\xi\,\ud\phi\right)$ which was
introduced in (\ref{Htilde}). The resulting self-adjoint operator is
in fact the Hamiltonian for the impurity free case.  The corresponding
Green function $\Gz$ is the generalized kernel of the Hamiltonian, and
it should obey the equation
\begin{equation*}
  (\tilde{H}-z)\Gz (\xi,\phi;\xi',\phi')
  =\delta(\xi-\xi')\delta(\phi-\phi')=\frac{1}{2\pi}
  \sum_{m=-\infty}^{\infty}\delta(\xi-\xi')\ena{im(\phi-\phi')}.
\end{equation*}
If we suppose $\Gz$ to be of the form
\begin{equation}\label{decompG}
  \Gz (\xi,\phi;\xi',\phi')=\frac{1}{2\pi}
  \sum_{m=-\infty}^{\infty}\Gz^{m}(\xi,\xi')\ena{im(\phi-\phi')},
\end{equation}
then, for all $m\in\Z$,
\begin{equation}\label{partialgreen}
  (\tilde{H}_{m}-z)\Gz^{m}(\xi,\xi')=\delta(\xi-\xi').
\end{equation}

Let us consider an arbitrary fixed $\xi'$, and set
\begin{displaymath}
  \mu=m,\textrm{~}4\theta=-\frac{a^4\omega^2}{4},\textrm{~}
  \lambda=-z-\frac{1}{4}.
\end{displaymath}
Then for all $\xi\neq\xi'$ equation \eqref{partialgreen} takes the
standard form of the differential equation of spheroidal functions
\eqref{spheroidal}. As one can see from \eqref{asyS3}, the solution
which is square integrable near infinity equals
$S^{\abs{m}(3)}_{\nu}(\xi,-a^4\omega^2/16)$. Furthermore, the solution
which is square integrable near $\xi=1$ equals
$Ps^{\abs{m}}_{\nu}(\xi,-a^4\omega^2/16)$ as one may verify with the
aid of the asymptotic formula
\begin{equation*}
  P^{m}_{\nu}(\xi)\sim
  \frac{\Gamma(\nu+m+1)}{2^{m/2}\,m!\,\Gamma(\nu-m+1)}\,(\xi-1)^{m/2}
  \quad\textrm{as~}\xi\to1+,\textrm{~for~}m\in\N_{0}.
\end{equation*}
 
We conclude that the $m$th partial Green function equals
\begin{equation}
  \label{Greenm}
  \Gz^{m}(\xi,\xi') = -\frac{1}{(\xi^2-1)
    \mathscr{W}(Ps^{\abs{m}}_{\nu},S^{\abs{m}(3)}_{\nu})}\,
  Ps^{\abs{m}}_{\nu}\!\left(\xi_{<},-\frac{a^4\omega^2}{16}\right)
  S^{\abs{m}(3)}_{\nu}\!\left(\xi_{>},-\frac{a^4\omega^2}{16}\right)
\end{equation}
where the symbol
$\mathscr{W}(Ps^{\abs{m}}_{\nu},S^{\abs{m}(3)}_{\nu})$ denotes the
Wronskian, and $\xi_{<}$, $\xi_{>}$ are respectively the smaller and
the greater of $\xi$ and $\xi'$. By the general Sturm-Liouville
theory, the factor
$(\xi^2-1)\mathscr{W}(Ps^{\abs{m}}_{\nu},S^{\abs{m}(3)}_{\nu})$ is
constant. Since $\Gz^{m}=\Gz^{-m}$ decomposition $\eqref{decompG}$ may
be simplified,
\begin{equation}
  \label{Green_cos}
  \Gz (\xi,\phi;\xi',\phi')
  = \frac{1}{2\pi}\,\Gz^{0}(\xi,\xi')
  +\frac{1}{\pi}\sum_{m=1}^{\infty}\Gz^{m}(\xi,\xi')\cos{[m(\phi-\phi')]}.
\end{equation}

\subsection{The Krein $Q$-function}

The Krein $Q$-function plays a crucial role in the spectral analysis
of impurities. It is defined at a point of the configuration space as
the regularized Green function evaluated at this point. Here we deal
with the impurity located in the center of the dot ($\xi=1$, $\phi$
arbitrary), and so, by definition,
\begin{displaymath}
  Q(z):=\mathcal{G}^{reg}_z(1,0;1,0).
\end{displaymath}

Due to the rotational symmetry,
\begin{displaymath}
  \Gz(\xi) := \Gz(\xi,\phi;1,0) = \Gz(\xi,\phi;1,\phi)
  = \Gz(\xi,0;1,0) = \frac{1}{2\pi}\,\Gz^{0}(\xi,1),
\end{displaymath}
and hence
\begin{equation*}
  (\tilde{H}_{0}-z)\Gz(\xi)=0,\quad\textrm{for~}\xi\in(1,\infty).
\end{equation*}
Let us note that from the explicit formula (\ref{Greenm}), one can
deduce that the coefficients $\Gz^{m}(\xi,1)$ in the series in
(\ref{Green_cos}) vanish for $m=1,2,3,\ldots$. The solution to this
equation is
\begin{equation*}
  \Gz(\xi) 
  \propto S^{0(3)}_{\nu}\!\left(\xi,-\frac{a^4\omega^2}{16}\right).
\end{equation*}
The constant of proportionality can be determined with the aid the
following theorem which we reproduce from \cite{ondiag}.

\begin{thm}
  \label{diag}
  Let $d(x,y)$ denote the geodesic distance between points $x,y$ of a
  two-dimensional manifold $X$ of bounded geometry. Let
  \begin{displaymath}
    U\in\mathcal{P}(X):=\Big\{U:\textrm{~}
    U_{+}:=\max(U,0)\in L^{p_{0}}_{loc}(X),\ U_{-}:=\max(-U,0)\in
    \sum_{i=1}^{n}L^{p_{i}}(X) \Big\}
  \end{displaymath}
  for an arbitrary $n\in\N$ and $2\leq p_{i}\leq\infty$. Then the
  Green function $\mathcal{G}_{U}$ of the Schr\"odinger operator
  $H_{U}=-\Delta_{LB}+U$ has the same on-diagonal singularity as that
  for the Laplace-Beltrami operator itself, i.e.,
  \begin{equation*}
    \mathcal{G}_{U}(\zeta;x,y) = \frac{1}{2\pi}\log\frac{1}{d(x,y)}
    +\mathcal{G}_{U}^{reg}(\zeta;x,y)
  \end{equation*}
  where $\mathcal{G}_{U}^{reg}$ is continuous on $X\times{}X$.
\end{thm}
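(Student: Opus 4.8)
The strategy is to separate the singularity produced by $-\Delta_{LB}$ alone from the (continuous) effect of switching on the potential. Fix $\zeta$ real and far enough to the left that it lies below the spectrum of both $-\Delta_{LB}$ and $H_U$; the latter is a bounded-below self-adjoint operator since, on a surface, $L^p$ with $p>1$ belongs to the Kato class, so under $U\in\mathcal{P}(X)$ the negative part $U_-$ is form-bounded relative to $-\Delta_{LB}$ with relative bound $0$ (KLMN). Write $R_0=(-\Delta_{LB}-\zeta)^{-1}$ and $R_U=(H_U-\zeta)^{-1}$, and let $\mathcal{G}_0$ be the integral kernel (Green function) of $R_0$. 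Iterating the resolvent identity once gives $R_U-R_0=-R_0\,U\,R_0+R_0\,U\,R_U\,U\,R_0$, so the theorem reduces to proving: (i) $\mathcal{G}_0(\zeta;x,y)=\frac{1}{2\pi}\log\frac{1}{d(x,y)}$ plus a function continuous on $X\times X$; and (ii) the integral kernels of $R_0\,U\,R_0$ and of $R_0\,U\,R_U\,U\,R_0$ are continuous on $X\times X$.

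For (i) I would use the heat kernel $p^0_t(x,y)$ of $-\Delta_{LB}$ and the representation $\mathcal{G}_0(\zeta;x,y)=\int_0^\infty\ena{\zeta t}\,p^0_t(x,y)\,\ud t$. Bounded geometry supplies, for $d(x,y)$ small, a uniform parametrix $p^0_t(x,y)=(4\pi t)^{-1}\ena{-d(x,y)^2/(4t)}\bigl(u_0(x,y)+t\,u_1(x,y)+\cdots\bigr)+O(t^\infty)$, with $u_0(x,x)=1$ and smooth, uniformly bounded coefficients, together with a Gaussian upper bound valid for all $t>0$. Splitting $\int_0^\infty=\int_0^1+\int_1^\infty$, the tail is continuous on $X\times X$ by the uniform bound, while in $\int_0^1$ only the leading parametrix term is singular as $d(x,y)\to0$; it contributes $(4\pi)^{-1}\int_0^1 t^{-1}\ena{-d(x,y)^2/(4t)}\ena{\zeta t}\,\ud t$, which by the standard $K_0$-type asymptotics differs from $\frac{1}{2\pi}\log\frac{1}{d(x,y)}$ by a continuous function, the remaining parametrix terms being $O\bigl(d(x,y)^2\log d(x,y)\bigr)$ or better and the $O(t^\infty)$ remainder contributing a continuous term. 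This yields (i) with exactly the coefficient $\frac{1}{2\pi}$.

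For (ii) I would use Hölder's inequality together with two facts that again rest on bounded geometry: for each fixed $x\in X$ the function $\mathcal{G}_0(\zeta;x,\cdot)$ lies in $L^r(X)$ for every $r\in[1,\infty)$ (it has only a logarithmic local singularity and, $\zeta$ being strictly below the spectrum, decays exponentially at infinity), and $R_0$ maps $L^s(X)$ into $C_b(X)$ for every $s>1$. Since the positive part of $U$ lies in $L^{p_0}_{loc}(X)$ with $p_0\ge2$ and the negative part in $\sum_i L^{p_i}(X)$, the function $w\mapsto U(w)\,\mathcal{G}_0(\zeta;w,y)$ belongs to some $L^s(X)$ with $s>1$, with norm locally bounded in $y$; hence $(x,y)\mapsto\int_X \mathcal{G}_0(\zeta;x,w)\,U(w)\,\mathcal{G}_0(\zeta;w,y)\,\ud V(w)$ is jointly continuous on $X\times X$. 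The operator $R_0\,U\,R_U\,U\,R_0$ is at least as regular: it carries an extra factor $U\,R_U\,U$ in the middle, and $R_U$ is comparable to $R_0$, so it is treated the same way. (Alternatively, (ii) follows from a Duhamel, or Feynman-Kac, expansion of $p^U_t-p^0_t$ integrated against $\ena{\zeta t}\,\ud t$, using the Kato smallness $\sup_{x}\int_0^\tau\!\!\int_X p^0_s(x,w)\,\abs{U(w)}\,\ud V(w)\,\ud s\to0$ as $\tau\to0^+$.)

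The step I expect to be the real work is not any single estimate but the \emph{uniformity}: the normal-coordinate charts, the heat parametrix, the $L^s\to C_b$ mapping property, and the exponential decay of $\mathcal{G}_0$ must all hold with constants independent of the base point, which is precisely the content of the bounded-geometry hypothesis; and one must verify that the merely global (rather than uniformly local) integrability assumed for $U_-$ still suffices both for semiboundedness of $H_U$ and for the convolution estimates in (ii) to depend continuously on $(x,y)$ with the required local uniformity. Once these uniform bounds are in place, $R_U-R_0$ has a continuous integral kernel, so $\mathcal{G}_U-\mathcal{G}_0$ is continuous on $X\times X$; combined with (i) this gives $\mathcal{G}_U(\zeta;x,y)=\frac{1}{2\pi}\log\frac{1}{d(x,y)}+\mathcal{G}_U^{reg}(\zeta;x,y)$ with $\mathcal{G}_U^{reg}$ continuous on $X\times X$, as asserted.
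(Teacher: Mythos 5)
First, a point of reference: the paper does not prove this theorem at all --- it is reproduced verbatim from the cited work of Br\"uning, Geyler and Pankrashkin \cite{ondiag}, so there is no in-paper argument to compare your sketch with; it has to be judged on its own merits against that reference.

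On the merits, your outline of part (i) (heat-kernel parametrix under bounded geometry, giving the $\frac{1}{2\pi}\log\frac{1}{d(x,y)}$ singularity of $\mathcal{G}_0$) is sound, and your worry about $U_-$ is actually the harmless one: $U_-\in\sum_i L^{p_i}(X)$ with $p_i\geq 2$ automatically gives uniformly local $L^{p}$ bounds (norms over unit balls are dominated by the global norm), hence uniform Kato-class control and infinitesimal form-boundedness. The genuine gap is in your treatment of $U_+$ in step (ii). The hypothesis only puts $U_+$ in $L^{p_0}_{loc}(X)$, so $U_+$ may grow arbitrarily fast at infinity --- and in the very application the paper needs, $U_+\sim\frac14 a^2\omega^2\sinh^2(\varrho/a)$ is unbounded. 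Consequently $U_+$ is not form-bounded relative to $-\Delta_{LB}$ (the form domain of $H_U$ is strictly smaller than that of $-\Delta_{LB}$), so the iterated resolvent identity $R_U-R_0=-R_0UR_0+R_0UR_UUR_0$ is only formal, and your H\"older estimate collapses: $w\mapsto U(w)\,\mathcal{G}_0(\zeta;w,y)$ need not lie in any $L^s(X)$, because the exponential decay of $\mathcal{G}_0$ cannot beat an arbitrary growth of $U_+$; likewise ``$R_U$ is comparable to $R_0$'' is not available in the direction you need. A workable repair separates the two parts of $U$ differently: handle $U_+$ by positivity and domination (adding a nonnegative potential decreases the heat kernel, so $0\leq\mathcal{G}_U\leq\mathcal{G}_{-U_-}$ pointwise) together with a purely local argument near the diagonal point --- e.g. cut $U$ off outside a small geodesic ball and observe that the Green functions of the cut and uncut operators differ, near that point, by a solution of the homogeneous equation, which is continuous by local elliptic regularity, or set up a local Lippmann--Schwinger equation --- and reserve your global resolvent/Kato-class machinery for $U_-$ alone. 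Without some such localization or monotonicity input, the decomposition you propose does not establish continuity of $\mathcal{G}_U-\mathcal{G}_0$ under the stated hypotheses.
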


Let us denote by $\Gz^{H}$ and $Q^{H}(z)$ the Green function and the
Krein $Q$-function for the Friedrichs extension of $H$, respectively.
Since $\tilde{H}=a^2H$ and $(\tilde{H}-z)\Gz=\delta$, we have
\begin{equation*}
  \Gz^{H}(\xi,\phi;\xi',\phi')
  =a^2\mathcal{G}_{a^2 z}(\xi,\phi;\xi',\phi'),\textrm{~}
  Q^{H}(z)=a^{2}Q(a^{2}z).
\end{equation*}
One may verify that
\begin{equation*}
  \log d(\varrho,0;\vec{0})=\log\varrho=\log (a\arg\cosh\xi)
  =\frac{1}{2}\log\!\left(2a^2(\xi-1)\right)+O(\xi-1)
\end{equation*}
as $\varrho\to0+$ or, equivalently, $\xi\to1+$. Finally, for the
divergent part
\begin{displaymath}
  F(\xi,\xi') := \Gz(\xi,\phi;\xi',\phi)-\Gz^{reg}(\xi,\phi;\xi',\phi)
  = \Gz(\xi,0;\xi',0)-\Gz^{reg}(\xi,0;\xi',0)
\end{displaymath}
of the Green function $\Gz$ we obtain the expression
\begin{equation}
  \label{divpart}
  F(\xi,1)=-\frac{1}{4\pi a^{2}}\,\log\!\left(2a^2(\xi-1)\right).
\end{equation}
From the above discussion, it follows that the Krein $Q$-function
depends on the coefficients $\alpha$, $\beta$ in the asymptotic
expansion
\begin{equation}\label{asysol}
  S^{0(3)}_{\nu}\!\left(\xi,-\frac{a^4\omega^2}{16}\right)
  = \alpha\log(\xi-1)+\beta+o(1)\quad\textrm{as~}\xi\to1+,
\end{equation}
and equals
\begin{equation}
  \label{KreinQgen}
  Q(z) = -\frac{\beta}{4\pi a^{2}\alpha}
  +\frac{\log(2a^{2})}{4\pi a^{2}}\,.
\end{equation}

To determine $\alpha$, $\beta$ we need relation (\ref{eq:S3}) for the
radial spheroidal function of the third kind. For $\nu$ and $\nu+1/2$
being non-integer, formula \eqref{joining} implies that
\begin{equation}
  \label{radialasy}
  \begin{split} 
    & S^{0(1)}_{\nu}(\xi,\theta)
    = \frac{\sin(\nu\pi)}{\pi}\,
    \ena{-i\pi(\nu+1)}K^{0}_{\nu}(\theta)
    Qs^{0}_{-\nu-1}(\xi,\theta), \\
    & S^{0(1)}_{-\nu-1}(\xi,\theta)
    = \frac{\sin(\nu\pi)}{\pi}\,
    \ena{i\pi\nu}K^{0}_{-\nu-1}(\theta)Qs^{0}_{\nu}(\xi,\theta).
  \end{split}
\end{equation}
Applying the symmetry relation \eqref{coeffsym} for expansion
coefficients, we derive that
\begin{equation*}
  \begin{split}
    Qs^{0}_{-\nu-1}(\xi,\theta)&
    = \sum_{r=-\infty}^{\infty}(-1)^{r}a^{0}_{-\nu-1,r}(\theta)
    Q^{0}_{-\nu-1+2r}(\xi)\\
    & = \sum_{r=-\infty}^{\infty}(-1)^{r}a^{0}_{\nu,r}(\theta)
    Q^{0}_{-\nu-1-2r}(\xi).\end{split}
\end{equation*}
Using the asymptotic formulae (see \cite{be})
\begin{equation*}
  Q^{0}_{\nu}(\xi) = -\frac{1}{2}\,
  \log\frac{\xi-1}{2}+\Psi(1)-\Psi(\nu+1)
  +O\!\left((\xi-1)\log(\xi-1)\right),
\end{equation*}
the series expansion in \eqref{angularexpansion} and formulae
\eqref{radialasy}, we deduce that, as $\xi\to1+$,
\begin{equation*}
  \begin{split}
    & \begin{split}
      S^{0(1)}_{\nu}(\xi,\theta) \sim &
      -\frac{\sin(\nu\pi)}{\pi}\,
      \ena{-i\pi(\nu+1)}K^{0}_{\nu}(\theta) \\
      & \times\left[ s^{0}_{\nu}(\theta)^{-1}
        \left(\frac{1}{2}
          \log\frac{\xi-1}{2}-\Psi(1)+\pi\cot(\nu\pi)\right)
        +\Psi{}s_{\nu}(\theta)\right],
    \end{split} \\
    & \begin{split}
      S^{0(1)}_{-\nu-1}(\xi,\theta) \sim &
      -\frac{\sin(\nu\pi)}{\pi}\,
      \ena{i\pi\nu}K^{0}_{-\nu-1}(\theta) \\
      & \times\left[ s^{0}_{\nu}(\theta)^{-1}\left(\frac{1}{2}
          \log\frac{\xi-1}{2}-\Psi(1)\right)
        +\Psi{}s_{\nu}(\theta)\right],
    \end{split} 
  \end{split}
\end{equation*}
where the coefficients $s^{\mu}_{n}(\theta)$ are introduced in
(\ref{eq:coeff_s}),
\begin{displaymath}
  \Psi{}s_{\nu}(\theta) := \sum^{\infty}_{r=-\infty}(-1)^{r}
  a^{0}_{\nu,r}(\theta)\Psi(\nu+1+2r),
\end{displaymath}
and where we have made use of the following relation for the digamma
function: $\Psi(-z)=\Psi(z+1)+\pi\cot(\pi{}z)$.

We conclude that
\begin{equation*}
  S^{0(3)}_{\nu}(\xi,\theta)
  \sim \alpha\log(\xi-1)+\beta+ O\left((\xi-1)\log(\xi-1)\right)
  \quad\textrm{as~}\xi\to1+,
\end{equation*}
where
\begin{align*}
  & \alpha=\frac{i\tan(\nu\pi)}{2\pi s^{0}_{\nu}(\theta)}
  \left(\ena{i\pi\nu} K^{0}_{-\nu-1}(\theta)-\ena{-i\pi(2\nu+3/2)}
    K^{0}_{\nu}(\theta)\right),\\
  &\beta=\alpha\left(-\log 2-2\Psi(1)
    +2\Psi s_{\nu}(\theta)s^{0}_{\nu}(\theta)\right)
  + \ena{-2i\pi\nu} s^{0}_{\nu}(\theta)^{-1}K^{0}_{\nu}(\theta).
\end{align*}
The substitution for $\alpha$, $\beta$ into \eqref{KreinQgen} yields
\begin{equation}
  \label{Qfunction}
  \begin{split}
    Q(z) = & -\frac{1}{4\pi a^{2}}\left(-\log 2-2\Psi(1)
      +2\,\Psi s_{\nu}\!\left(-\frac{a^4\omega^2}{16}\right)
      s^{0}_{\nu}\!\left(-\frac{a^4\omega^2}{16}\right)\right)\\
    & +\frac{1}{2 a^{2}\tan(\nu\pi)}\left(\ena{i\pi(3\nu+3/2)}\,
      \frac{K^{0}_{-\nu-1}(-\frac{a^4\omega^2}{16})}{
        K^{0}_{\nu}(-\frac{a^4\omega^2}{16})}-1\right)^{\!-1}
    +\frac{\log(2a^{2})}{4\pi a^{2}}
  \end{split}
\end{equation}
where $\nu$ is chosen so that
\begin{equation}\label{specpar}
  \lambda^{0}_{\nu}\!\left(-\frac{a^4\omega^2}{16}\right)
  = -z-\frac{1}{4}\,.
\end{equation}

For $\nu=n$ being an integer, we can immediately use the known
asymptotic formulae for spheroidal functions (see Section~16.12 in
\cite{be}) which yield
\begin{equation*}
  \begin{split}
    S^{0(3)}_{n}(\xi,\theta)
    =\,& \frac{i s^{0}_{n}(\theta)}{4\sqrt{\theta}
      K^{0}_{n}(\theta)}\,\log(\xi-1)
    -\frac{i s^{0}_{n}(\theta)\log{2}}{4\sqrt{\theta}
      K^{0}_{n}(\theta)} \\
    & +\frac{i s^{0}_{n}(\theta)^{2}}{2\sqrt{\theta}
      K^{0}_{n}(\theta)}\,
    \sum_{2r\geq -n}(-1)^{r}a^{0}_{n,r}(\theta)h_{n+2r}
    +\frac{K_{n}^{0}(\theta)}{s^{0}_{n}(\theta)}+O(\xi-1),
  \end{split}
\end{equation*}
as $\xi\to1+$. Here, $h_{0}=1,h_{k}=1/1+1/2+\ldots+1/k$. By
\eqref{KreinQgen}, one can calculate the $Q$-function in this case,
too.

\subsection{The spectrum of a quantum dot with impurity}

The Green function of the Hamiltonian describing a quantum dot with
impurity is given by the Krein resolvent formula
\begin{equation*}
  \Gz^{H(\chi)}(\xi,\phi;\xi',\phi')
  = \Gz^{H}(\xi,\phi;\xi',\phi')-\frac{1}{Q^H(z)-\chi}\,
  \Gz^{H}(\xi,0;1,0)\Gz^{H}(1,0;\xi',0)
\end{equation*}
(recall that, due to the rotational symmetry,
$\Gz^{H}(\xi,\phi;1,0)=\Gz^{H}(\xi,0;1,0)$). The parameter
$\chi:=a^2\kappa\in (-\infty,\infty\,]$ determines the corresponding
self-adjoint extension $H(\chi)$ of $H$. In the physical
interpretation, this parameter is related to the strength of the
$\delta$ interaction. Recall that the value $\chi=\infty$ corresponds
to the Friedrichs extension of $H$ representing the case with no
impurity. This fact is also apparent from the Krein resolvent formula.

The unperturbed Hamiltonian $H(\infty)$ describes a harmonic
oscillator on the Lobachevsky plane. As is well known (see, for
example, \cite{bs}), for the confinement potential tends to infinity
as $\varrho\to\infty$, the resolvent of $H(\infty)$ is compact, and
the spectrum of $H(\infty)$ is discrete and semibounded. The
eigenvalues of $H(\infty)$ are solutions of a scalar equation whose
introduction also relies heavily on the theory of spheroidal functions.
We are sceptic about the possibility of deriving an explicit formula
for the eigenvalues. But the equation turned out to be convenient
enough to allow for numerical solutions. A more detailed discussion
jointly with a basic numerical analysis is provided in a separate
paper \cite{ts}.

A similar observation about the basic spectral properties
(discreteness and semiboundedness) is also true for the operators
$H(\chi)$ for any $\chi\in\R$ since, by the Krein resolvent formula,
the resolvents for $H(\chi)$ and $H(\infty)$ differ by a rank one
operator. Moreover, the multiplicities of eigenvalues of $H(\chi)$ and
$H(\infty)$ may differ at most by $\pm1$ (see
\cite[Section~8.3]{weidmann}).

A more detailed and rather general analysis which is given in
\cite{bgl} can be carried over to our case almost literally. Denote by
$\sigma$ the set of poles of the function $Q^H(z)$ depending on the
spectral parameter $z$. Note that $\sigma$ is a subset of
$\spec(H(\infty))$. Consider the equation
\begin{equation}
  \label{Qkappa}
  Q^H(z) = \chi.
\end{equation}

\begin{thm}
  The spectrum of $H(\chi)$ is discrete and consists of four
  nonintersecting parts $S_{1}$, $S_{2}$, $S_{3}$, $S_{4}$ described
  as follows:
  \begin{enumerate}
  \item $S_{1}$ is the set of all solutions to equation (\ref{Qkappa})
    which do not belong to the spectrum of $H(\infty)$. The
    multiplicity of all these eigenvalues in the spectrum of $H(\chi)$
    equals 1.
  \item $S_{2}$ is the set of all $\lambda\in\sigma$ that are multiple
    eigenvalues of $H(\infty)$. If the multiplicity of such an
    eigenvalue $\lambda$ in $\spec(H(\infty))$ equals $k$ then its
    multiplicity in the spectrum of $H(\chi)$ equals $k-1$.
  \item $S_{3}$ consists of all
    $\lambda\in\spec(H(\infty))\setminus\sigma$ that are not solutions
    to equation (\ref{Qkappa}). the multiplicities of such an
    eigenvalue $\lambda$ in $\spec(H(\infty))$ and $\spec(H(\chi))$
    are equal.
  \item $S_{4}$ consists of all
    $\lambda\in\spec(H(\infty))\setminus\sigma$ that are solutions to
    equation (\ref{Qkappa}). If the multiplicity of such an eigenvalue
    $\lambda$ in $\spec(H(\infty))$ equals $k$ then its multiplicity
    in the spectrum of $H(\chi)$ equals $k+1$.
  \end{enumerate}
\end{thm}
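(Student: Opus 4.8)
The plan is to read off $\spec(H(\chi))$ directly from the Krein resolvent formula displayed above, essentially transcribing the analysis of \cite{bgl}. Since $U_a$ grows at infinity, $H(\infty)$ has compact resolvent, and since the resolvents of $H(\chi)$ and $H(\infty)$ differ by a rank-one operator, $H(\chi)$ also has discrete spectrum and $z\mapsto(H(\chi)-z)^{-1}$ is meromorphic with poles exactly at the eigenvalues. Put $g_z:=\Gz^{H}(\,\cdot\,,\,\cdot\,;1,0)$; by rotational symmetry $g_z$ lies in the $m=0$ channel, and for $z\notin\spec(H(\infty))$ it is the solution of $(H-z)g=0$ that is square integrable near infinity and carries the logarithmic singularity at $\xi=1$. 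The resolvent formula then reads
\[
  (H(\chi)-z)^{-1} = (H(\infty)-z)^{-1} - T(z),\qquad
  T(z):=\frac{1}{Q^{H}(z)-\chi}\,\langle\,\cdot\,,\,g_{\bar z}\,\rangle\,g_{z}.
\]
The bookkeeping is then immediate: $S_1$ is disjoint from $\spec(H(\infty))\supseteq S_2\cup S_3\cup S_4$, $S_2\subset\sigma$ whereas $S_3,S_4\subset\spec(H(\infty))\setminus\sigma$, and $S_3$, $S_4$ are separated by whether \eqref{Qkappa} holds; conversely, a pole of $(H(\chi)-z)^{-1}$ that is not a pole of $(H(\infty)-z)^{-1}$ must be a zero of $Q^{H}(z)-\chi$ and hence lies in $S_1$, while the residue computations below place every surviving pole from $\spec(H(\infty))$ into $S_2\cup S_3\cup S_4$. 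So it remains, at each candidate $z_0$, to compute the rank of $\mathrm{Res}_{z=z_0}(H(\chi)-z)^{-1}$, which is the multiplicity.

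First I would handle $z_0\notin\spec(H(\infty))$. There $g_z$ and $Q^{H}$ are holomorphic near $z_0$, so $T(z)$ is singular only where $Q^{H}(z_0)=\chi$. As a Krein $Q$-function, $Q^{H}$ is a Nevanlinna function, hence real-analytic and strictly increasing on each component of $\R\setminus\sigma$; thus every zero of $Q^{H}-\chi$ there is simple, $(Q^{H})'(z_0)\neq0$, and $T(z)$ has a simple pole with residue $((Q^{H})'(z_0))^{-1}\langle\,\cdot\,,g_{z_0}\rangle g_{z_0}$, a nonzero rank-one operator, so $(H(\chi)-z)^{-1}$ has a simple pole at $z_0$ with rank-one residue. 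A short computation with the definitions of $f_0,f_1$ and formulae \eqref{divpart}, \eqref{KreinQgen} shows that the boundary condition $f_1=\kappa f_0$ holds for $g_{z_0}$ precisely when $Q^{H}(z_0)=\chi$, so $g_{z_0}\in\Dom(H(\chi))$ is then a genuine eigenfunction; hence $z_0$ is an eigenvalue of multiplicity one. This is exactly $S_1$.

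Next I would use the partial-wave decomposition for $z_0\in\spec(H(\infty))$. The perturbation changes only the $m=0$ block, $\tilde H_0$ has simple spectrum, and for $m\neq0$ every eigenfunction vanishes at $\xi=1$ like $(\xi-1)^{\abs{m}/2}$ and is therefore unaffected; hence the $H(\infty)$-eigenspace at $z_0$ splits as $E^{0}_{z_0}\oplus E^{\neq0}_{z_0}$ with $\dim E^{0}_{z_0}\in\{0,1\}$ and $\dim E^{\neq0}_{z_0}=k-\dim E^{0}_{z_0}$. Using that the $m=0$ eigenfunctions of $\tilde H_0$ do not vanish at $\xi=1$ (which follows from their expansion in Legendre functions $P^{0}$ together with $P^{0}_\nu(1)=1$), one checks the equivalence $z_0\in\sigma\iff\dim E^{0}_{z_0}=1$. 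If $\dim E^{0}_{z_0}=0$, then $g_z$ is holomorphic at $z_0$, so when $Q^{H}(z_0)\neq\chi$ the residue of $(H(\chi)-z)^{-1}$ equals that of $(H(\infty)-z)^{-1}$ and the multiplicity stays $k$ (case $S_3$), while when $Q^{H}(z_0)=\chi$ it equals that residue modified by one more rank-one term orthogonal to $E^{\neq0}_{z_0}$, so the multiplicity becomes $k+1$ (case $S_4$). If $\dim E^{0}_{z_0}=1$, with $L^2$-normalized spanning eigenfunction $\psi_0$, then $Q^{H}$ and $g_z$ both have simple poles at $z_0$ while $(Q^{H}(z)-\chi)^{-1}$ has a simple zero; expanding, $T(z)$ acquires a simple pole whose residue is a rank-one multiple of $\langle\,\cdot\,,\psi_0\rangle\psi_0$ which, subtracted from the residue of $(H(\infty)-z)^{-1}$, annihilates exactly its $E^{0}_{z_0}$-component, leaving an operator of rank $k-1$. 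This is $S_2$ when $k\geq2$; for $k=1$ the point $z_0$ drops out of $\spec(H(\chi))$ altogether, consistently with its absence from all four $S_i$.

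The step I expect to be the crux is this last cancellation: one must check that the residue of $T(z)$ at a pole $z_0\in\sigma$ matches, with the exact constant, the $E^{0}_{z_0}$-part $-\psi_0\langle\,\cdot\,,\psi_0\rangle$ of $\mathrm{Res}_{z=z_0}(H(\infty)-z)^{-1}$. This reduces to an identity of the form $\mathrm{Res}_{z=z_0}Q^{H}(z)=c\,\psi_0(1)^2$, where $c$ is a fixed constant absorbing the $a^2$-rescaling and the factor $1/2\pi$, relating the residue of the $Q$-function, the residue of $g_z$ at $z_0$ — itself a multiple of $\psi_0(\xi)\psi_0(1)$, read off from the spectral expansion of the $m=0$ partial Green function — and the $L^2$-normalization of $\psi_0$. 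This is a structural feature of boundary triples and Krein $Q$-functions, implicit in \cite{bgp} and in the general analysis of \cite{bgl}, and since $Q^{H}$ was defined here as the regularized diagonal Green function through \eqref{KreinQgen}, the identity carries over; I would confirm it by matching the Laurent expansions at $z_0$ of $Q^{H}(z)$ and of the $m=0$ partial Green function via the eigenfunction expansion of the latter. The remaining ingredients — compactness of the resolvent of $H(\infty)$, simplicity of $\spec(\tilde H_0)$, the Nevanlinna property of $Q^{H}$, and the endpoint behaviour of the partial-wave eigenfunctions — are standard or already in hand.
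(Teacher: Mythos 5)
Your proposal is correct and takes essentially the route the paper intends: the paper gives no proof of its own but states that the analysis of \cite{bgl} carries over, and that analysis is exactly your residue/rank bookkeeping for the Krein resolvent formula (compactness of the resolvent, the $m=0$ reduction, nonvanishing of the $m=0$ eigenfunctions at $\xi=1$ giving $\sigma=\{$poles of $Q^H\}$, and the cancellation at points of $\sigma$). The crux identity you single out, $\mathrm{Res}_{z=z_0}Q^{H}(z)=-\abs{\psi_0(1)}^{2}$ up to the fixed normalization, matched against the residue of $g_z$, is indeed what makes the $S_2$ case work, and your justification via the eigenfunction expansion of the $m=0$ partial Green function (the subtracted divergent part being $z$-independent) is the standard and correct one.
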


Hence the eigenvalues of $H(\chi)$, $\chi\in\R$, different from those
of the unperturbed Hamiltonian $H(\infty)$ are solutions to
(\ref{Qkappa}). As far as we see it, this equation can be solved only
numerically. We have postponed a systematic numerical analysis of
equation (\ref{Qkappa}) to a subsequent work. Note that the Krein
$Q$-function \eqref{Qfunction} is in fact a function of $\nu$, and
hence dependence \eqref{specpar} of the spectral parameter $z$ on
$\nu$ is fundamental.  In this context, it is quite useful to know for
which values of $\nu$ the spectral parameter $z$ is real. A partial
answer is given by Proposition~\ref{lambdapropo}.

\section{Conclusion}

We have proposed a Hamiltonian describing a quantum dot in the
Lobachevsky plane to which we added an impurity modeled by a $\delta$
potential. Formulas for the corresponding $Q$- and Green functions
have been derived. Further analysis of the energy spectrum may be
accomplished for some concrete values of the involved parameters (by
which we mean the curvature $a$ and the oscillator frequency $\omega$)
with the aid of numerical methods.

\setcounter{section}{1}
\renewcommand{\thesection}{\Alph{section}}
\setcounter{equation}{0}
\renewcommand{\theequation}{\Alph{section}.\arabic{equation}}

\section*{Appendix: Spheroidal functions}

Here we follow the source \cite{be}. Spheroidal functions are
solutions to the equation
\begin{equation}\label{spheroidal}
  (1-\xi^{2})\parc{^2\psi}{\xi^{2}}-2\xi\parc{\psi}{\xi}
  +\left[\lambda+4\theta(1-\xi^2)-\mu^2(1-\xi^2)^{-1} \right]\psi=0, 
\end{equation}
where all parameters are in general complex numbers. There are two
solutions that behave like $\xi^{\nu}$ times a single-valued function
and $\xi^{-\nu-1}$ times a single-valued function at $\infty$. The
exponent $\nu$ is a function of $\lambda$, $\theta$, $\mu$, and is
called the characteristic exponent. Usually, it is more convenient to
regard $\lambda$ as a function of $\nu$, $\mu$ and $\theta$. We shall
write $\lambda=\lambda^{\mu}_{\nu}(\theta)$. If $\nu$ or $\mu$ is an
integer we denote it by $n$ or $m$, respectively. The functions
$\lambda^{\mu}_{\nu}(\theta)$ obey the symmetry relations
\begin{equation}
  \label{eq:lambdasym}
  \lambda^{\mu}_{\nu}(\theta)
  = \lambda^{-\mu}_{\nu}(\theta)
  = \lambda^{\mu}_{-\nu-1}(\theta) 
  = \lambda^{-\mu}_{-\nu-1}(\theta).
\end{equation}

A first group of solutions (radial spheroidal functions) is obtained
as expansions in series of Bessel functions,
\begin{equation}
  \label{spheroidalS}
  S^{\mu (j)}_{\nu}(\xi,\theta)=(1-\xi^{-2})^{-\mu/2}
  s^{\mu}_{\nu}(\theta)\sum_{r=-\infty}^{\infty}a^{\mu}_{\nu,r}
  \psi^{(j)}_{\nu +2r}(2\theta^{1/2}\xi),
\end{equation}
$j=1,2,3,4$, where the factors $s^{\mu}_{\nu}(\theta)$ are determined
below and
\begin{align*}
  & \psi^{(1)}_{\nu}(\zeta) = \sqrt{\frac{\pi}{2\zeta}}\,
  J_{\nu +1/2}(\zeta),\quad
  \psi^{(2)}_{\nu}(\zeta)
  = \sqrt{\frac{\pi}{2\zeta}}\,Y_{\nu +1/2}(\zeta), \\
  & \psi^{(3)}_{\nu}(\zeta) = \sqrt{\frac{\pi}{2\zeta}}\,
  H^{(1)}_{\nu +1/2}(\zeta),\quad
  \psi^{(4)}_{\nu}(\zeta)
  = \sqrt{\frac{\pi}{2\zeta}}\,H^{(2)}_{\nu +1/2}(\zeta).
\end{align*}
The coefficients $a^{\mu}_{\nu,r}(\theta)$ (denoted only $a_{r}$ for
the sake of simplicity) satisfy a three term recurrence relation
\begin{equation}\label{rekurence}
  \begin{split}
    & \frac{(\nu +2r-\mu)(\nu +2r -\mu -1)}{(\nu +2r-3/2)
      (\nu +2r-1/2)}\,\theta a_{r-1}
    +\frac{(\nu +2r +\mu +2)(\nu +2r +\mu +1)}{(\nu +2r +3/2)
      (\nu +2r +5/2)}\,\theta a_{r+1}\\
    & +\left[ \lambda^{\mu}_{\nu}(\theta)
      -(\nu +2r)(\nu +2r+1)+\frac{(\nu +2r)(\nu +2r+1)+\mu^2-1}
      {(\nu +2r-1/2)(\nu +2r +3/2)}\,2\theta \right]a_{r}=0.
  \end{split}
\end{equation}
Here and in what follows we assume that $\nu +1/2$ is not an integer
(to our knowledge, the omitted case is not yet fully investigated).

The coefficients $a^{\mu}_{\nu,r}(\theta)$ may be chosen such that
\begin{equation*}
  a^{\mu}_{\nu,0}(\theta)=a^{\mu}_{-\nu -1,0}(\theta)
  =a^{-\mu}_{\nu,0}(\theta),
\end{equation*}
and so (see (\ref{eq:lambdasym}))
\begin{equation}
  \label{coeffsym}
  a^{\mu}_{\nu,r}(\theta)=a^{\mu}_{-\nu-1,-r}(\theta)
  = \frac{(\nu-\mu+1)_{2r}}{(\nu+\mu+1)_{2r}}\,
  a^{-\mu}_{\nu,r}(\theta)
\end{equation}
where $(a)_{r}:=a(a+1)(a+2)\dots(a+r-1)=\Gamma(a+r)/\Gamma(a)$,
$(a)_{0}:=1$. Equation \eqref{rekurence} leads to a convergent
infinite continued fraction and this way one can prove that
\begin{equation}\label{coefasy}
  \lim_{r\to\infty}\,\frac{r^2 a_{r}}{a_{r-1}}
  = \lim_{r\to -\infty}\,\frac{r^2 a_{r}}{a_{r+1}}=\frac{\theta}{4}.
\end{equation}
From \eqref{coefasy} and the asymptotic formulae for Bessel functions,
it follows that \eqref{spheroidalS} converges if $\abs{\xi}>1$.

If we set in (\ref{spheroidalS})
\begin{equation}
  \label{eq:coeff_s}
  s^{\mu}_{\nu}(\theta)
  = \left[\sum_{r=-\infty}^{\infty}
    (-1)^{r}a^{\mu}_{\nu, r}(\theta)\right]^{-1}
\end{equation}
then
\begin{equation*}
  S^{\mu (j)}_{\nu}(\xi,\theta)\sim\psi^{(j)}_{\nu}(2\theta^{1/2}\xi),
  \quad\textrm{for~}
  \abs{\arg (\theta ^{1/2}\xi)} < \pi\quad\textrm{as }\xi\to\infty.
\end{equation*}
We have the asymptotic forms, valid as $\xi\to\infty$,
\begin{align}
  && \begin{split} 
    \label{asyS3}
    & S^{\mu(3)}_{\nu}(\xi,\theta)
    = \frac{1}{2}\,\theta^{-1/2}\xi^{-1}\ena{i(2\theta^{1/2}
      \xi-\nu\pi/2-\pi/2)}[1+O(\abs{\xi}^{-1})],\\
    & \textrm{for~} -\pi<\arg (\theta^{1/2}\xi)<2\pi,
  \end{split}
\end{align}
and
\begin{align}
  && \begin{split}
    \label{asyS4}
    & S^{\mu(4)}_{\nu}(\xi,\theta)
    = \frac{1}{2}\,\theta^{-1/2}\xi^{-1}
    \ena{-i(2\theta^{1/2}\xi-\nu\pi/2-\pi/2)}[1+O(\abs{\xi}^{-1})],\\
    & \textrm{for~} -2\pi<\arg (\theta^{1/2}\xi)<\pi.
  \end{split}
\end{align}

The radial spheroidal functions satisfy the relation
\begin{equation}
  \label{eq:S3}
  S^{\mu(3)}_{\nu} = \frac{1}{i\cos(\nu\pi)}
  \left(S^{\mu(1)}_{-\nu-1}+i\,\ena{-i\pi\nu}S^{\mu(1)}_{\nu}\right).
\end{equation}

The radial spheroidal functions are especially useful for large $\xi$;
the larger is the $\xi$ the better is the convergence of the
expansion. To obtain solutions useful near $\pm1$, and even on the
segment $(-1,1)$, one uses expansions in series in Legendre functions,
\begin{equation}
  \label{angularexpansion}
  \begin{split}
    & Ps^{\mu}_{\nu}(\xi,\theta)
    = \sum_{r=-\infty}^{\infty}(-1)^{r}a^{\mu}_{\nu,r}(\theta)
    P^{\mu}_{\nu +2r}(\xi), \\
    & Qs^{\mu}_{\nu}(\xi,\theta)
    = \sum_{r=-\infty}^{\infty}(-1)^{r}a^{\mu}_{\nu,r}(\theta)
    Q^{\mu}_{\nu +2r}(\xi).
  \end{split}
\end{equation}
These solutions are called the angular spheroidal functions and are
related to the radial spheroidal functions by the following formulae:
\begin{equation}\label{joining}
  \begin{split}
    & S^{\mu(1)}_{\nu}(\xi,\theta)
    = \pi^{-1}\sin[(\nu-\mu)\pi]\ena{-i\pi(\nu+\mu+1)}
    K^{\mu}_{\nu}(\theta)Qs^{\mu}_{-\nu-1}(\xi,\theta), \\
    & S^{m(1)}_{n}(\xi,\theta)
    = K^{m}_{n}(\theta)Ps^{m}_{n}(\xi,\theta),
  \end{split}
\end{equation}
where $K^{\mu}_{\nu}(\theta)$ can be expressed as a series in
coefficients $a^{\mu}_{\nu,r}(\theta)$, and sometimes it is called the
joining factor. In more detail, for any $k\in\Z$ it holds true that
\begin{equation*}
  \begin{split}
    K^{\mu}_{\nu}(\theta)
    =\, & \frac{1}{2}\left(\frac{\theta}{4}\right)^{\!\nu/2+k}
    \Gamma(1+\nu-\mu+2k)\,\ena{(\nu+k)\pi i}s^\mu_\nu(\theta)
    \\
    & \times\frac{\displaystyle\sum_{r=-\infty}^k
      \frac{(-1)^ra^\mu_{\nu,r}(\theta)}{(k-r)!\,\Gamma(\nu+k+r+3/2)}}
    {\displaystyle\sum_{r=k}^\infty
      \frac{(-1)^ra^\mu_{\nu,r}(\theta)}{(r-k)!\,\Gamma(1/2-\nu-k-r)}}
    \,.
  \end{split}
\end{equation*}

\begin{prop}\label{lambdapropo}
  Let $\nu,\theta\in\R$ and set $\mu =0$. Then
  $\lambda^{0}_{\nu}(\theta)\in\R$.
\end{prop}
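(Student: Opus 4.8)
The plan is to exploit the three-term recurrence relation \eqref{rekurence}, which characterizes $\lambda^{0}_{\nu}(\theta)$, together with the reality and symmetry of its coefficients. First I would observe that for $\mu=0$ the recurrence \eqref{rekurence} has the form $\alpha_r a_{r-1} + \gamma_r a_{r+1} + (\lambda - \beta_r) a_r = 0$, where all the $\alpha_r$, $\gamma_r$, $\beta_r$ are real whenever $\nu,\theta\in\R$ (since they are rational expressions in $\nu$, $2\theta$, and $r$), and where the recurrence, when truncated, leads to a convergent continued fraction whose limiting behavior \eqref{coefasy} forces $a_r\to 0$ rapidly in both directions. The characteristic exponent $\nu$ being fixed, $\lambda^{0}_{\nu}(\theta)$ is determined as the value of $\lambda$ for which this two-sided recurrence admits a nontrivial solution $(a_r)_{r\in\Z}$ decaying at $\pm\infty$; that is, $\lambda^{0}_{\nu}(\theta)$ is an eigenvalue of the associated doubly-infinite Jacobi-type operator.

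The key step is then to recognize that this operator can be symmetrized. Concretely, I would introduce a diagonal change of variables $a_r = d_r b_r$ chosen so that, after multiplying the $r$th equation by a suitable real factor, the recurrence for $(b_r)$ becomes symmetric, i.e. the new sub- and super-diagonal coefficients coincide: one needs $d_{r-1}/d_r$ times (the coefficient of $a_{r-1}$ in row $r$) to equal $d_{r+1}/d_r$ times (the coefficient of $a_{r+1}$ in row $r+1$), up to the common row-scaling. This is possible precisely when the product $\alpha_r \gamma_{r-1}$ is non-negative for all $r$ (so that square roots are real), in which case $\lambda^{0}_{\nu}(\theta)$ appears as a (generalized) eigenvalue of a self-adjoint tridiagonal operator with real entries, hence is real. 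The symmetry relations \eqref{coeffsym}, which for $\mu=0$ read $a^{0}_{\nu,r}=a^{0}_{-\nu-1,-r}$, and the explicit ratios of Pochhammer symbols there, are exactly the structural input that makes the reflection $r\mapsto -r$ behave well and confirm that the required positivity holds.

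The main obstacle I anticipate is controlling the sign of the off-diagonal products $\alpha_r\gamma_{r-1}$ for all $r\in\Z$: for $\mu=0$ these are $\dfrac{(\nu+2r)(\nu+2r-1)}{(\nu+2r-3/2)(\nu+2r-1/2)}\,\theta \cdot \dfrac{(\nu+2r)(\nu+2r-1)}{(\nu+2r-1/2)(\nu+2r+1/2)}\,\theta$, and one must check this is a perfect square times a positive quantity — it is manifestly a square times $\theta^2 \ge 0$ divided by positive-looking denominators, but the denominators $(\nu+2r-3/2)$ etc.\ can have either sign depending on $r$, so one has to verify that the sign pattern is consistent (the denominators come in pairs whose product of signs is controlled since consecutive half-integers $\nu+2r\mp 1/2$ differ by $1$). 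A cleaner route, which I would fall back on if the sign bookkeeping is delicate, is to invoke the self-adjointness already proved: by Proposition~2.1, $\tilde H_{m}$ is essentially self-adjoint for $m\neq 0$, and the solutions of the spheroidal equation with the spectral identification $\lambda=-z-1/4$ are its eigenfunctions; since a self-adjoint operator has real spectrum and real generalized eigenvalues, and since $\lambda^{0}_{\nu}(\theta)$ for real $\theta<0$ corresponds through $\mu=0$, $4\theta=-a^4\omega^2/4$ to the radial operator $\tilde H_0$ (whose Friedrichs extension is likewise self-adjoint), reality of $z$, and hence of $\lambda$, follows. Either way, the proof is short once the correct self-adjoint model for the recurrence is identified.
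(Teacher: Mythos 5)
Your main route is the same as the paper's: view \eqref{rekurence} for $\mu=0$ as a doubly-infinite Jacobi-type eigenvalue problem \eqref{rekurencematrix} (with the eigenvector singled out by the decay \eqref{coefasy}) and try to symmetrize it by a real diagonal rescaling $a_r=d_r b_r$. The gap is that the positivity you need, and claim to hold, in fact fails. For $\mu=0$ the relevant product of opposite off-diagonal entries is
$\gamma^{0,\nu}_{r}\alpha^{0,\nu}_{r-1}=\frac{(\nu+2r)^2(\nu+2r-1)^2\,\theta^2}{(\nu+2r-1/2)^2\,(\nu+2r-3/2)(\nu+2r+1/2)}$,
whose sign is that of $(\nu+2r-3/2)(\nu+2r+1/2)$; this is negative exactly when $-1/2<\nu+2r<3/2$, i.e.\ for the (generically unique) integer $r_0\in(-\nu/2-1/4,\,-\nu/2+3/4)$. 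So real square roots for the rescaling do not exist at that one link, and the symmetry relation \eqref{coeffsym} is irrelevant to this sign question — it does not "confirm that the required positivity holds." This exceptional index is precisely where the paper has to do extra work: it fixes $L_{r_0}(\nu)=\sqrt{\abs{\nu+2r_0+1/2}}$ by convention and then performs a further change of coordinates flipping the signs of every other coefficient below $r_0$ before asserting symmetry. Your write-up defers exactly this step ("if the sign bookkeeping is delicate\dots"), which is the only nontrivial part of the proof; without handling $r_0$ (or explaining why a single wrong-sign link cannot produce a non-real $\lambda$ for the decaying solution), the argument is incomplete.

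The fallback you propose does not close the gap. Essential self-adjointness of $\tilde H_m$ ($m\neq0$) and self-adjointness of the extensions of $\tilde H_0$ give reality of their spectra, i.e.\ of a \emph{discrete} set of values of $z$, corresponding to special values of $\nu$ for which an $L^2$ eigenfunction exists. For a generic real $\nu$, neither $Ps^{0}_{\nu}(\cdot,\theta)$ nor $S^{0(3)}_{\nu}(\cdot,\theta)$ is an eigenfunction of any self-adjoint realization on $(1,\infty)$ (they fail square integrability, respectively at $\infty$ and at $\xi=1$, unless $z$ is an eigenvalue), so $\lambda^{0}_{\nu}(\theta)$ is not exhibited as an eigenvalue of a self-adjoint operator and its reality cannot be inferred this way. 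Moreover the logic runs backwards: in the paper the proposition is used to decide for which real $\nu$ the spectral parameter $z$ determined by \eqref{specpar} is real; appealing to the reality of the operator's spectrum to prove the proposition is a non sequitur. So the fallback cannot replace the missing sign analysis, and the proposal as it stands does not prove the statement.
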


\begin{proof}
  To simplify the notation we denote, in \eqref{rekurence},
  \begin{align*}
    & \alpha^{\mu,\nu}_{r}(\theta)
    = \frac{(\nu +2r+\mu +2)(\nu +2r+\mu +1)}{(\nu +2r+3/2)
      (\nu+2r+5/2)}\,\theta, \\
    & \beta^{\mu,\nu}_{r}(\theta)
    = -(\nu +2r)(\nu +2r+1)+\frac{(\nu+2r)(\nu+2r+1)+\mu^2-1}
    {(\nu +2r-1/2)(\nu +2r +3/2)}\,2\theta, \\
    & \gamma^{\mu,\nu}_{r}(\theta) = \frac{(\nu +2r-\mu)
      (\nu +2r-\mu -1)}{(\nu +2r-3/2)(\nu +2r-1/2)}\,\theta.
  \end{align*}
  The resulting formula may be written in the matrix form,
  \begin{equation}\label{rekurencematrix}
    \begin{pmatrix}
      &\ddots  &  &  &  &  &  &\\
      &  &\gamma_{-1}  &\beta_{-1}  &\alpha_{-1}  &  & &  \\
      &  &  &\gamma_{0}  &\beta_{0}  &\alpha_{0}  & & \\
      &  &  &  &\gamma_{1}  &\beta_{1}  &\alpha_{1}&  \\
      &  &  &  &  &  & & \ddots 
    \end{pmatrix}
    \begin{pmatrix}
      \vdots \\
      a_{-1}\\
      a_{0}\\
      a_{1}\\
      \vdots 
    \end{pmatrix}=
    -\lambda
    \begin{pmatrix}
      \vdots \\
      a_{-1}\\
      a_{0}\\
      a_{1}\\
      \vdots 
    \end{pmatrix}
  \end{equation}
  where we have omitted the fixed indices.

  As one can see,
  \begin{equation*}
    \gamma^{0, \nu}_{r+1}(\theta)
    = \frac{\nu+2r+5/2}{\nu+2r+1/2}\,\alpha^{0,\nu}_{r}(\theta)
  \end{equation*}
  and so
  \begin{equation*}
    \frac{\nu+2r+1/2}{\nu+2r-3/2}\alpha^{0,\nu}_{r-1}(\theta)
    a^{0}_{\nu,r-1}(\theta)+\beta^{0,\nu}_{r}
    (\theta)a^{0}_{\nu,r}(\theta)+\alpha^{0,\nu}_{r}(\theta)
    a^{0}_{\nu,r+1}(\theta)=-\lambda^{0}_{\nu}(\theta)
    a^{0}_{\nu,r}(\theta).
  \end{equation*}
  Substitution $a^{0}_{\nu,r}=L_{r}(\nu)\tilde{a}^{0}_{\nu,r}$, where
  $L_{r}(\nu)$ are non-zero constants, yields
  \begin{equation*}
    \begin{split}
      &\frac{\nu+2r+1/2}{\nu+2r-3/2}\alpha^{0,\nu}_{r-1}(\theta)
      \tilde{a}^{0}_{\nu,r-1}(\theta)\frac{L_{r-1}(\nu)}
      {L_{r}(\nu)}+\beta^{0,\nu}_{r}(\theta)
      \tilde{a}^{0}_{\nu,r}(\theta)+\alpha^{0,\nu}_{r}(\theta)
      \tilde{a}^{0}_{\nu,r+1}(\theta)\frac{L_{r+1}(\nu)}{L_{r}(\nu)}\\
      &=-\lambda^{0}_{\nu}(\theta)\tilde{a}^{0}_{\nu,r}(\theta).
    \end{split}
  \end{equation*}
  We require the matrix in \eqref{rekurencematrix} to be symmetric in
  the new coordinates $\lsz\tilde{a}_{r}\psz$. This implies that
  \begin{equation*}
    \frac{\nu+2r+1/2}{\nu+2r-3/2}\frac{L_{r-1}(\nu)}{L_{r}(\nu)}
    =\frac{L_{r}(\nu)}{L_{r-1}(\nu)}.
  \end{equation*}
  For $r\notin (-\nu/2-1/4,-\nu/2+3/4)$, the solution is
  $L_{r}(\nu)=\sqrt{\abs{\nu+2r+1/2}}$. For $r_{0}\equiv
  r\in(-\nu/2-1/4,-\nu/2+3/4)$, there is no real solution and so we
  set $L_{r_{0}}(\nu)=\sqrt{\abs{\nu+2r_{0}+1/2}}$ and make another
  transformation of coordinates:
  \begin{equation*}
    \tilde{\tilde{a}}_{r}=\begin{cases}
      -\tilde{a}_{r} & \quad\textrm{for~}r=r_{0}-(2k-1),
      \ k\in\N, \\
      \tilde{a}_{r} & \quad\textrm{for~all~other~}r.	
    \end{cases}
  \end{equation*}

  Relation \eqref{rekurencematrix} can be viewed as an eigenvalue
  equation with a symmetric matrix in the coordinate system $\lsz
  \tilde{\tilde{a}}_{k}\psz$, hence $\lambda^{0}_{\nu}(\theta)$ must
  be real.
\end{proof}

\subsection*{Acknowledgments}

The authors wish to acknowledge gratefully partial support from the
following grants: grant No.~201/05/0857 of the Grant Agency of Czech
Republic (P.~\v{S}.) and grant No.~LC06002 of the Ministry of
Education of Czech Republic (M.~T.).

\end{document}